\documentclass[amssymb,pra,twocolumn,notitlepage,superscriptaddress]{revtex4-2}
\pdfoutput=1
\usepackage{times}
\usepackage{amsmath}
\usepackage{ulem}
\usepackage{graphicx}
\usepackage{dsfont}
\usepackage{color}

\usepackage{braket}
\usepackage[colorlinks=true, letterpaper=ture, pdfstartview=FitV, linkcolor=blue, citecolor=blue, urlcolor=blue]{hyperref}
\usepackage{amssymb}
\usepackage{enumitem}
\usepackage{mathtools}
\usepackage{mathrsfs}
\usepackage[most]{tcolorbox}
\usepackage{amsthm}
\newtheorem{lemma}{Lemma}
\begin{document}


\title{Probing Quantum Information Scrambling via Local Randomized Measurements}

\author{Yan-Ming Chen}
 \affiliation{Key Laboratory of Atomic and Subatomic Structure and Quantum Control (Ministry of Education), Guangdong Basic Research Center of Excellence for Structure and Fundamental Interactions of Matter, and School of Physics, South China Normal University, Guangzhou 510006, China}

\author{Dan-Bo Zhang}
 \altaffiliation[Contact author: ]{dbzhang@m.scnu.edu.cn}
 \affiliation{Key Laboratory of Atomic and Subatomic Structure and Quantum Control (Ministry of Education), Guangdong Basic Research Center of Excellence for Structure and Fundamental Interactions of Matter, and School of Physics, South China Normal University, Guangzhou 510006, China}
 \affiliation{Guangdong Provincial Key Laboratory of Quantum Engineering and Quantum Materials, Guangdong-Hong Kong Joint Laboratory of Quantum Matter, and Frontier Research Institute for Physics, South China Normal University, Guangzhou 510006, China}

\date{\today}

\begin{abstract}
In quantum many-body dynamics, locally encoded information typically scrambles across the entire system, becoming inaccessible to local probes. The upper bound of accessible information of local probes can be characterized by the Holevo information via optimal measurement. In this work, we investigate the information dynamics of quantum scrambling utilizing local randomized probes, quantified by the averaged accessible information (AAI). We derive an analytical expression for the AAI under Haar-random measurements and demonstrate that it is a function of purity of local reduced density matrix. Operationally, we employ the classical shadow protocol, using only single-qubit randomized Pauli measurements, to efficiently extract the AAI across extended subsystems. Through numerical simulations across diverse many-body paradigms, we show that the AAI can reveal distinct scrambling behaviors, resolving phenomena that range from dynamical confinement and ballistic transport to persistent scar revivals and many-body localization. This work highlights a pragmatic paradigm shift—from relying on optimal measurements to utilizing randomized local probes—for the characterization of complex quantum information dynamics.
\end{abstract}

\maketitle


\section{\label{sec:level1}Introduction}
The scrambling of quantum information throughout a many-body system stands as a central paradigm for our fundamental understanding of thermalization and the onset of quantum chaos~\cite{rigol2008thermalization, hosur2016chaos, deutsch1991quantum, srednicki1994chaos, maldacena2016bound}. For an isolated quantum system, an initially localized perturbation spreads irreversibly: information that is initially accessible via simple, few-body local observables is progressively diffused and "hidden" within highly entangled, non-local degrees of freedom across the entire system~\cite{hayden2007black, page1993average}. A challenge thus emerges for experimental characterization. While the true signature of scrambling resides in these global correlations, practical observations are inherently restricted to measuring local regions \cite{swingle2018unscrambling}. Tackling this locally-probed scrambling receives considerable attention, promising deep insights into phenomena ranging from the rapid scrambling in black holes \cite{sekino2008fast, shenker2014black, maldacena2016remarks} to the stark suppression of transport in condensed matter systems exhibiting many-body localization \cite{nandkishore2015many, abanin2019many, basko2006metal, pal2010many}.

A variety of powerful metrics have been developed to formally characterize the onset of information scrambling. Among the most prominent are out-of-time-order correlators (OTOCs) \cite{maldacena2016bound, swingle2018unscrambling}, which have emerged as a canonical measure for diagnosing quantum chaos by capturing the failure of initially commuting local operators to commute at later times. Complementary to OTOCs is the concept of operator size~\cite{von2018operator, qi2019quantum}, which directly quantifies the spatial support of a Heisenberg-evolved local operator as it becomes a complicated sum of non-local Pauli strings. The average of specific OTOCs over a complete basis is intimately connected to this operator size \cite{roberts2018operator}. Moreover, OTOCs are now becoming experimentally accessible in advanced quantum simulators \cite{li2017measuring, landsman2019verified, joshi2020quantum, mi2021information, garttner2017measuring, w1cp-l5vq, cg3f-rggs} and a protocol for measuring the operator size can be found in Ref.~\cite{PhysRevA.107.022407}. On the other hand, quantum information scrambling can be also directly revealed by information measure, such as the Holevo information \cite{holevo1973bounds, nielsen2010quantum} and others based on channel capacity~\cite{chen2025subsystem, Lo-Monaco-2023, Lo-Monaco_2024}. Remarkably, the Holevo information quantifies the maximum amount of classical information extractable from a quantum ensemble and has been proposed for exploring information scrambling of quantum many-body systems through local probes ~\cite{yuan2022quantum, sun2025revealing}. However, obtaining the Holevo information often relies on a state-dependent optimal measurement, which is usually hard to implement in experiments. 

An alternative paradigm that avoids optimal state-dependent measurements is to investigate the statistically typical information accessible via a set of randomized measurement bases. This philosophy underpins the concept of subentropy—a theoretical lower bound on accessible information for a density matrix~\cite{jozsa1994lower, bussandri2024renyi}, which finds applications from informational power of quantum measurements~\cite{dall2011informational} to deep thermalization \cite{mark2024maximum, ho2022exact, cotler2023emergent}.
Despite those advances, a question remains unexplored: can the information derived from averaging over randomized measurement bases serve as a sensitive probe to decode complex quantum information scrambling for disparate quantum many-body systems? Addressing this critical gap constitutes the primary motivation of our present work.

In this work, we address this gap by establishing an alternative framework for probing information scrambling based on the averaged accessible information~(AAI). By averaging the information gain over randomized measurement bases via the Haar random measurement, we derive a robust, purity-dependent metric for the AAI, denoted as $\chi_2$. Crucially, rather than demanding full state tomography to evaluate non-linear matrix logarithms, $\chi_2$ is fundamentally rooted in the evaluation of subsystem purities. This allows it to be efficiently and unbiasedly extracted using established protocols like classical shadows with randomized Pauli measurements \cite{huang2020predicting, elben2023randomized}. Building upon this experimental accessibility, we demonstrate that AAI can be a highly sensitive and robust probe of information dynamics of quantum systems. It can successfully resolve the chaotic confinement in the mixed-field Ising model, ballistic transport in the transverse-field Ising model, coherent revivals of quantum scars in the PXP model, and strict transport suppression in many-body localization. By replacing optimality with efficient randomized probes, this framework effectively bridges the gap between theoretical concepts of scrambling and practical experimental implementations on quantum simulators.

The remainder of this paper is organized as follows. In Sec.~\ref{THEORETICAL METRIC AND MEASUREMENT PROTOCOL}, we establish the theoretical framework by introducing the average accesssiable information and detail the classical shadows protocol used for its experimental estimation. In Sec.~\ref{result}, we apply this framework to four paradigmatic quantum many-body models and systematically analyze the resulting spatiotemporal information scrambling dynamics. Finally, we summarize our work in Sec.~\ref{conclusion}.

\section{THEORETICAL METRIC AND MEASUREMENT PROTOCOL}
\label{THEORETICAL METRIC AND MEASUREMENT PROTOCOL}
In this section, we formalize our framework for probing information scrambling. We first introduce the averaged accessible information ($\chi_2$), an operationally defined metric derived from the R\'enyi-2 entropy under randomized measurements, and subsequently detail the classical shadow protocol employed for its efficient experimental estimation.

\subsection{Averaged-Accessible Information}
\label{renyi-information}
In the study of quantum information scrambling, we track how an initially localized perturbation spreads through a quantum many-body system via observers are restricted to probing a local subsystem $A$. Consequently, the state of interest is not a pure state, but a reduced density matrix obtained by tracing out the environment. By preparing the system with different initial conditions—for instance, an unperturbed state $|\psi\rangle$ and a locally perturbed state $O|\psi\rangle$—the time evolution generates a quantum ensemble of mixed states, $\mathcal{E} = \{p_1=\frac{1}{2}, \rho_1(t); p_2=\frac{1}{2}, \rho_2(t)\}$. The survival of the initial local information is precisely determined by the distinguishability of these reduced density matrices.

In standard quantum information theory, the ultimate limit on state distinguishability is quantified by the Holevo information\cite{holevo1973bounds},
\begin{equation}
\chi(t) = S(\sum_i p_i \rho_i(t)) - \sum_i p_i S(\rho_i(t)),  
\end{equation}
where $S(\rho) = -\text{Tr}(\rho \log \rho)$ denotes the von Neumann entropy.
This quantity serves as the Optimal Accessible Information (OAI), representing the theoretical upper bound of information extractable by a receiver capable of performing an optimal, state-dependent measurement. To provide a more accessible lower bound, the subentropy $Q(\rho)$ was introduced \cite{jozsa1994lower}, giving rise to the analogous metric, 
\begin{equation}
\chi_Q(t) = Q(\sum_i p_i \rho_i(t)) - \sum_i p_i Q(\rho_i(t)). 
\end{equation}
Here, the subentropy $$Q(\rho) = - \sum_{k=1}^{n} \left( \lambda_k \ln \lambda_k \prod_{l \neq k} \frac{\lambda_k}{\lambda_k - \lambda_l} \right)$$ is a function defined explicitly in terms of the complete eigenspectrum of the density matrix (see Ref.~\cite{jozsa1994lower})

This necessitates a paradigm shift. Rather than characterizing scrambling through a theoretical upper bound (OAI) whose operational realization demands unattainable optimal measurements, we ask a more pragmatic question: how much information can be robustly and typically extracted if the observer probes the subsystem "blindly" using uninformed, randomized measurements? While subentropy $Q(\rho)$ serves as a lower bound derived via the standard logarithmic entropy, its practical utility is hindered by its reliance on the full state spectrum. We bridge this critical gap by formulating an analogous measure grounded instead in the R\'enyi-2 entropy, which naturally interfaces with randomized experimental protocols. We formalize the averaged accessible information (AAI)

\begin{align}
\chi_2 (t) = Q_2\left( \frac{\rho_1(t) + \rho_2(t)}{2} \right) - \frac{Q_2\left( \rho_1(t) \right) + Q_2\left( \rho_2(t) \right)}{2}.
\label{eq:chi}
\end{align}
By evaluating the information gain averaged over all possible Haar-random measurement bases (as detailed in Appendix~\ref{app:renyi2_derivation}), we derive the expression for the measure $Q_2$, 
\begin{equation}
Q_2(\rho) = \log[2/(1+\text{Tr}(\rho^2))].
\end{equation}
We begin by evaluating the classical R\'enyi-2 mutual information $I_2$ in the prototypical scenario of information extraction for an ensemble of pure states, $\mathcal{E}_{\text{pure}} = \{p_i, |\psi_i\rangle\}$. The average state of this ensemble is $\rho = \sum_i p_i |\psi_i\rangle\langle\psi_i|$. Suppose a receiver performs a projective measurement in an arbitrarily chosen basis $\mathcal{M} = \{|\alpha_j\rangle\}_{j=1}^d$. The conditional probability of outcome $j$ given the prepared state $|\psi_i\rangle$ is $P(j|i) = |\langle\alpha_j|\psi_i\rangle|^2$, and the marginal probability over the ensemble is $P(j) = \sum_i p_i P(j|i) = \langle\alpha_j|\rho|\alpha_j\rangle$. We quantify the information gained from this specific measurement,
\begin{align}
    I_2(\mathcal{E}_{\text{pure}} : \mathcal{M}) 
    = H_2\{P(j)\} - \sum_i p_i H_2\{P(j|i)\},
    \label{eq:app_renyi2_expanded}
\end{align}
where the classical R\'enyi-2 entropy $H_2(q) = -\log(\sum_j q_j^2)$ for the probability distribution $\{q_j\}$. 
To capture the typical information extracted via randomized measurements, we must average this quantity over all possible bases distributed according to the Haar measure.
Rather than computing the expectation of the logarithm, we operationally define our typical information metric by taking the logarithm of the Haar-averaged probability moments. This sequence of evaluating moments prior to the nonlinear logarithm directly aligns with the standard, robust experimental protocols for extracting R\'enyi entanglement entropies from randomized measurement data \cite{elben2018renyi}. The Haar-averaged mutual information becomes 

\begin{align} 
\label{eq:I2_annealed}
    \langle I_2 \rangle_{\text{Haar}} \equiv& -\log \left\langle \sum_j \langle \alpha_j | \rho | \alpha_j \rangle^2 \right\rangle_{\text{Haar}} \notag \\
    &+ \sum_i p_i \log \left\langle \sum_j |\langle \alpha_j | \psi_i \rangle|^4 \right\rangle_{\text{Haar}}.
\end{align}

For the first term involving the average state $\rho$, the Haar integration over a single basis state $|\alpha_j\rangle$ yields
\begin{align}
    \big\langle \langle \alpha_j | \rho | \alpha_j \rangle^2 \big\rangle_{\text{Haar}} = \frac{\operatorname{Tr}(\rho^2) + 1}{d(d+1)}.
    \label{eq:haar_rho}
\end{align}
For the conditional probability term, we evaluate the Haar average of the fourth moment of the overlap with a given pure state $|\psi_i\rangle$. Due to the unitary invariance of the Haar measure, this evaluates to
\begin{align}
    \left\langle \sum_j |\langle \alpha_j | \psi_i \rangle|^4 \right\rangle_{\text{Haar}} = \frac{2}{d+1}.
    \label{eq:haar_pure}
\end{align}

By substituting these Haar-averaged moments back into the R\'enyi-2 mutual information structure, we eliminate the basis dependence and arrive at a universal, operationally meaningful metric. In stark contrast to the Holevo bound, which dictates the maximum accessible information, this quantity represents the average information recovered via random probes, which we define as $Q_2$,
\begin{align}
    Q_2(\rho) \equiv\langle I_2 \rangle_{\text{Haar}}  &= \log\left( \frac{2}{1 + \sum_{k} \lambda_k^2} \right) \notag \\
    &= \log\left( \frac{2}{1 + \operatorname{Tr}(\rho^2)} \right).
    \label{eq:Q2_final}
\end{align}

This elegant algebraic expression for $Q_2$, dependent solely on the state's purity, unveils a remarkably mathematical structure. We rigorously prove that $Q_2(\rho)$ is equal to a complex spectral function involving all eigenvalues $\{\lambda_i\}_{i=1}^N$ of the density matrix,
\begin{equation}
    Q_2(\rho) = -\ln \left( \sum_{i=1}^N \frac{\lambda_i^{N+1}}{\prod_{\substack{j \neq i}} (\lambda_i - \lambda_j)} \right).
    \label{eq:Q2_spectral}
\end{equation}
This specific functional form, while previously identified in the context of generalized Wehrl entropies \cite{mintert2004wehrl}, is here derived from fundamental operational principles of randomized measurements. The full algebraic proof of this equivalence is detailed in Appendix~\ref{dengshi}. 

Crucially, this equivalence unveils an interesting connection to established quantum information bounds. As famously shown by Jozsa \cite{jozsa1994lower}, both the von Neumann entropy (the upper bound) and the subentropy (a theoretical lower bound) admit elegant representations as contour integrals over the complex plane. We reveal that $Q_2$ shares an intimate related contour integral structure, completing a trio of entropic measures. The explicit derivation of this elegant contour integral representation is provided in Appendix~\ref{integral}. Furthermore, we establish that $Q_2$ satisfies a suite of information-theoretic criteria: it is strictly positive, additive, and Schur-concave. Most remarkably, while the standard quantum R\'enyi-2 entropy $S_2(\rho) = -\log(\text{Tr}(\rho^2))$ famously violates concavity in general, our operationally derived $Q_2(\rho)$ restores this property. Specifically, for any density matrices $\rho_1, \rho_2$ and mixing parameter $\lambda \in [0,1]$, it obeys,
\begin{equation}
    Q_2\big(\lambda \rho_1 + (1-\lambda)\rho_2\big) \ge \lambda Q_2(\rho_1) + (1-\lambda) Q_2(\rho_2).
    \label{eq:Q2_concavity}
\end{equation}
The formal proof of the concavity condition is detailed in Appendix~\ref{app:concavity}. An immediate consequence of this concavity is that our defined dynamic metric $\chi_2(t) = Q_2\left(\sum_i p_i \rho_i(t)\right) - \sum_i p_i Q_2(\rho_i(t))$ is guaranteed to be non-negative ($\chi_2 \ge 0$). These properties demonstrate $Q_2$ as a mathematically rigorous and exceptionally well-behaved information measure.

\subsection{The Classical Shadows Protocol}
\label{classical shadow}
To map the spatiotemporal dynamics of information scrambling, we must evaluate $Q_2$ across numerous distinct local subsystems of a given size $L_A$ simultaneously. Traditionally, determining the required purities $\text{Tr}(\rho_A^2)$ for each possible subsystem requires a dedicated measurement per target. To circumvent this severe sampling bottleneck, we employ the classical shadows protocol \cite{huang2020predicting}. The advantage of this method is its ability to decouple the measurement phase from the target observables. A single set of randomized measurements on the entire $N$-qubit system is sufficient to efficiently and simultaneously predict many local properties, including the purities of all possible small subsystems.

The classical shadows protocol achieves this by constructing a description of the unknown quantum state $\rho$. The procedure involves executing $M$ independent experimental runs to generate a collection of "classical snapshots," denoted as $\{\hat{\rho}_i\}_{i=1}^M$. In each run, the state $\rho$ is first subjected to a randomly selected unitary operation $U$ from a predefined ensemble $\mathcal{U}$, followed by a projective measurement in the computational basis, yielding an outcome bit string $|\hat{b}_i\rangle$. Crucially, an unbiased state estimator, such that $\mathbb{E}[\hat{\rho}_i] = \rho$, is then reconstructed in classical post-processing by applying the inverse of the associated average measurement channel \cite{huang2020predicting}, defined as $\hat{\rho}_i = \mathcal{M}^{-1}(U_i^\dagger|\hat{b}_i\rangle\langle\hat{b}_i|U_i)$. Once this classical shadow is collected, the expectation value of any linear observable $O$ can be efficiently estimated by simply averaging its trace over the snapshots, $\langle O \rangle \approx \frac{1}{M}\sum_{i=1}^M \text{Tr}(O\hat{\rho}_i)$.

In this work, we employ the protocol based on random Pauli measurements. A significant advantage is that the inverse channel $\mathcal{M}^{-1}$ decomposes into a local tensor product, leading to a simple, explicit form for each snapshot,
\begin{align}
    \hat{\rho} = \bigotimes_{j=1}^N \left( 3 U_j^\dagger |\hat{b}_j\rangle\langle \hat{b}_j| U_j - \mathbb{I}_j \right).
\end{align}
The detailed derivation of this result can be found in Appendix~5C of Ref.~\cite{huang2020predicting}. The total number of experimental measurements $M$ required to estimate a collection of $N_{\text{obs}}$ different $k$-local observables to an additive precision $\epsilon$ is bounded by $M = \mathcal{O}\left(3^k \log(N_{\text{obs}}) / \epsilon^2\right)$ \cite{elben2023randomized, huang2020predicting}. This bound is entirely independent of the full system size, and can be used to evaluate $N_{\text{obs}} = \binom{L}{L_A}$ distinct $L_A$-local purities.

To further enhance experimental efficiency, we additionally utilize an efficient lookup table approach \cite{5-0.5--4}. This method imposes strict prerequisites. It is only applicable to standard single-qubit independent random Pauli measurements and the target observable must be a local quantity expressible as a product of single-qubit traces. The computational efficiency of the shadow kernel is rooted in a crucial simplification that the trace inner product between any two single-qubit snapshots, $\text{Tr}(\sigma_i^{(t)} \tilde{\sigma}_i^{(t)})$, is determined not by matrix multiplication but by a direct comparison of their corresponding measurement outcomes, namely the Pauli eigenstates $|s_i^{(t)}\rangle$ and $|\tilde{s}_i^{(t)}\rangle$. This comparison is quantified by the relation,

\begin{align}
    \text{Tr}(\sigma_i^{(t)} \tilde{\sigma}_i^{(t)}) = 9|\langle s_i^{(t)}|\tilde{s}_i^{(t)}\rangle|^2 - 4
\end{align}
which effectively creates a lookup table with only three possible values: 5~(when the measured eigenstates are identical), -4~(when they are orthogonal within the same Pauli basis), 1/2~(when they belong to different Pauli bases).This allows for the rapid calculation of the kernel by simply comparing the classical measurement data, bypassing complex matrix operations entirely.

To robustly estimate the expectation value of an observable, $o = \text{Tr}(O\rho)$, from a collection of independent classical snapshots $\{\hat{\rho}_i\}_{i=1}^M$, we employ the median-of-means\cite{huang2020predicting} estimation technique. While the standard empirical mean $\frac{1}{M}\sum_i \text{Tr}(O\hat{\rho}_i)$, provides an unbiased estimator, its convergence guarantees suffer from a poor dependence on the error probability $\delta$, with the required number of samples $M$ scaling as $\mathcal{O}(1/\delta)$. This scaling renders the estimator highly susceptible to corruption by a few statistical outliers. The median-of-means method mitigates this issue by partitioning the total $M = N \times K$ snapshots into $K$ disjoint batches, each of size $N$. For each batch, a separate empirical mean is computed, yielding $K$ independent estimators. The final prediction is then taken as the median of these $K$ means.

\begin{align}
\hat{o}_i(N,K) = 
&\mathrm{median}\left\{ \hat{o}_i^{(1)}(N,1), \dots, \hat{o}_i^{(K)}(N,1) \right\}
\quad,\\& \text{where}\quad
\hat{o}_i^{(k)} = \frac{1}{N} \sum_{j=N(k-1)+1}^{Nk} \mathrm{Tr}\!\left(O_i \hat{\rho}_j\right) \nonumber
\end{align}
The principal advantage of this procedure is its robustness against outliers. This method allows for highly accurate predictions from a finite number of measurements.

\section{SIMULATION RESULTS}
\label{result}
We now deploy th averaged accessible information $\chi_2$ to investigate non-equilibrium dynamics across four quantum many-body systems. We demonstrate the capacity of $\chi_2$ to unambiguously resolve distinct scrambling behavior.

\subsection{Models and Initial States}
\label{model}
In this paper, four systems are comparatively studied: the mixed-field Ising model (MFIM), the transverse-field Ising model (TFIM), the quantum many-body scar model (PXP), and the many-body localization model (MBL).

The mixed field Ising model (MFIM) incorporates longitudinal fields and boundary terms to mitigate finite-size effects,
\begin{align}
H_{\text{MFIM}} =& J \sum_{i=1}^{L-1} \sigma_i^z \sigma_{i+1}^z + g \sum_{i=1}^L \sigma_i^x + h \sum_{i=2}^{L-1} \sigma_i^z,
\end{align}
Following classic literature on mixed-field Ising model\cite{MFIM-Hamiton}, we fix $J=1$, set the transverse field strength to \(g=\frac{\sqrt{5}+5}{8}\) and the longitudinal field strength to \(h=\frac{\sqrt{5}+1}{4}\), to ensure that the energy scales $J$, $h$, and $g$ are comparable and mutually competing, thereby preventing any individual term (such as the Ising interaction or the magnetic field) from dominating the system dynamics. The parameters are constructed via $\sqrt{5}$ associated with the golden ratio, corresponding to a mathematically incommensurate configuration.

The transverse field Ising model (TFIM) is a canonical quantum spin system\cite{TFIM-bizu} exhibiting a quantum phase transition between ferromagnetic and paramagnetic phases. Its Hamiltonian is,
\[
H_{\text{TFIM}} = J \sum_{i=1}^{L-1} \sigma_i^z \sigma_{i+1}^z + g \sum_{i=1}^L \sigma_i^x,
\]
where \(J>0\) is the ferromagnetic Ising coupling, \(g=0.6\) the transverse field strength, \(\sigma_i^\alpha\) Pauli matrices, and \(L\) the system size. 

The PXP model is a constrained spin system featuring quantum many-body scars and non-thermal dynamics\cite{yuan2022quantum},
\[
H_{\text{PXP}} = \sum_i \hat{P}_{i-1} \sigma_i^x \hat{P}_{i+1}, \quad \hat{P}_j = \frac{1 - \sigma_j^z}{2},
\]
where \(\hat{P}_j\) projects out \(\sigma_j^z = +1\), constraining \(\sigma_i^x\) flips to neighbor sites with \(\sigma^z = -1\). This constraint yields quantum scars, causing persistent revivals in quantum dynamics.

The many-body localization (MBL) model describes a disordered interacting spin system that evades thermalization\cite{MBL-Hamiton},
\begin{align}
H_{\text{MBL}} &= J_\perp \sum_i \left(S_i^x S_{i+1}^x + S_i^y S_{i+1}^y\right) \nonumber \\&+ J_z \sum_i S_i^z S_{i+1}^z + \sum_i h_i S_i^z,
\end{align}
where \(J_\perp, J_z\) are anisotropic exchange couplings, \(S_i^\alpha = \sigma_i^\alpha/2\) spin-1/2 operators, and \(h_i\) quenched on-site disorder. In our numerical simulations, we adopt the Heisenberg MBL model with $J_\perp = J_z = 1$. We set the disorder strength $W = 8.0$, which places the system in the many-body localized phase for 1D spin chains. The on-site disorder fields $h_i$ are independently sampled from a uniform distribution over the interval $[-W, W]$, i.e., $h_i \sim \mathcal{U}[-8, 8]$. To ensure reproducibility of all numerical results presented in this work, we fix the random seed for disorder field generation to 42 across all experiments.

To probe the diverse scrambling mechanisms, we select the initial states to optimally excite the characteristic dynamics of each specific Hamiltonian. For the unconstrained spin chains—the MFIM and TFIM—we initiate the evolution from the fully polarized state, $|\Psi_0\rangle = |\uparrow\uparrow\cdots\uparrow\rangle$. This uniform state is ideally suited for tracking the propagation or confinement of a single local perturbation, providing a pristine visualization of the resulting information light cones. Conversely, for the PXP and MBL models, we exclusively employ the N\'eel configuration, $|\mathbb{Z}_2\rangle = |\uparrow\downarrow\uparrow\downarrow\cdots\uparrow\downarrow\rangle$. In the constrained PXP system, preparing this specific density-wave state is uniquely crucial for accessing the non-thermal subspace and sustaining the hallmark periodic revivals of quantum many-body scars. Simultaneously, in the MBL phase, the N\'eel state serves as a canonical probe, monitoring the extremely slow decay of its macroscopic staggered magnetization.

\subsection{Verification of averaged accessible information using Clifford circuit}

Our theoretical framework for the averaged accessible information ($\chi_2$), derived via Haar averaging in Appendix~\ref{app:renyi2_derivation}, relies on the continuous unitary measure. Mathematically, for quantities dependent on the second moments of the state (such as the purity $\operatorname{Tr}(\rho^2)$), averaging over any discrete set of unitary operations that constitutes at least a 2-design is equivalent to the continuous Haar average. To verify this equivalence and illustrate the operational feasibility of our metric without relying on the continuous Haar integral, we demonstrate that $\chi_2$ can be reliably extracted via finite random sampling from Clifford groups.

Following standard randomized measurement protocols, we sample unitaries from the global Clifford group, which forms a unitary 3-design. Specifically, in our numerical verification, for single-qubit subsystems ($L_A=1$), we uniformly sample unitaries directly from the complete set of 24 single-qubit Clifford elements. For two-qubit subsystems ($L_A=2$), we generate random Clifford unitaries by executing random quantum circuits composed of universal Clifford generators (Hadamard, Phase, and CNOT gates) with a depth of 50 layers, ensuring sufficient mixing to approximate a 2-design. By evaluating the empirical average of the squared measurement probabilities, we reconstruct the subsystem purities and subsequently $\chi_2$.

To evaluate the sampling convergence, we focus on the PXP model ($L=8$, $\Omega=1.0$) as a representative example. The system is initialized in the N\'eel state, $|\Psi_1\rangle = |\uparrow\downarrow\uparrow\downarrow\cdots\uparrow\downarrow\rangle$, and we track its distinguishability from a locally perturbed state $|\Psi_2\rangle$. The numerical results are presented in Fig.~\ref{fig:2-design(Clifford)}. At each time step $t$, we perform the Clifford sampling procedure for varying numbers of random unitaries ($N \in \{10, 50, 200\}$), and repeat this process over 5 independent trials for each $N$. As expected, for small sample sizes, the extracted $\chi_2$ exhibits noticeable statistical fluctuations. However, as the number of samples increases, the estimated dynamics rapidly converge to the theoretically exact values (solid black lines) computed via full state diagonalization.

This sampling analysis confirms that $\chi_2$ is not merely a theoretical construct dependent on infinite continuous integration, but a robust quantity that can be systematically and accurately approximated using standard, experimentally feasible Clifford 3-design sampling techniques.

\begin{figure}[htbp]
    \centering
    \includegraphics[width=0.5\textwidth]{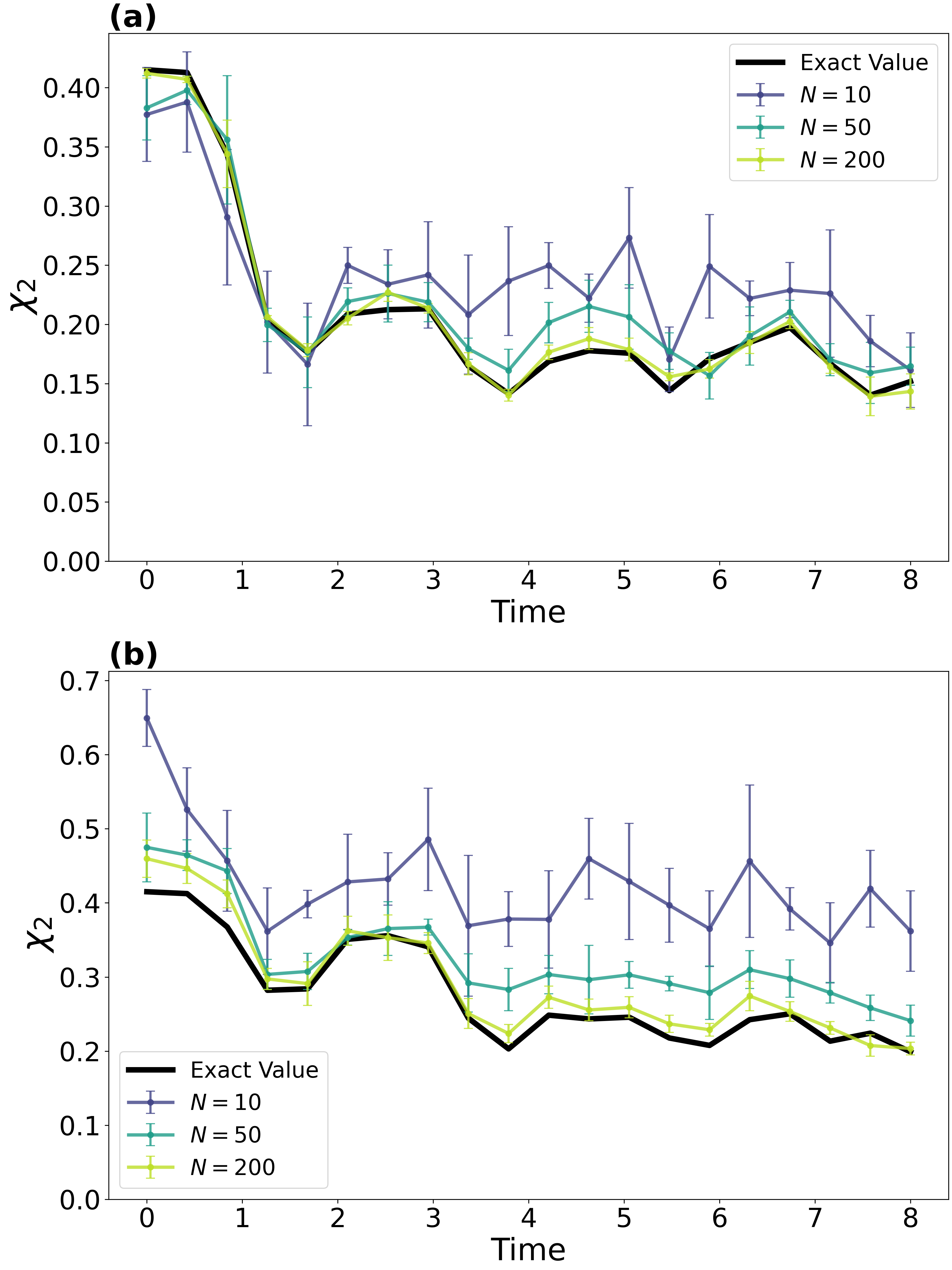}
    \caption{Convergence of AAI ($\chi_2$) via random Clifford sampling in the PXP model. Exact analytical evolution (solid black lines) is compared with statistical estimations for subsystem sizes (a) $L_A=1$ and (b) $L_A=2$. Estimations are obtained by sampling $N \in \{10, 50, 200\}$ random unitaries. For $L_A=1$, samples are drawn uniformly from the 24-element Clifford group; for $L_A=2$, samples are generated via 50-depth random Clifford circuits. Colored lines denote mean values, with error bars indicating the standard deviation across 5 independent trials.}
    \label{fig:2-design(Clifford)}
\end{figure}

\begin{figure*}[htbp]
    \centering
    \includegraphics[width=\textwidth]{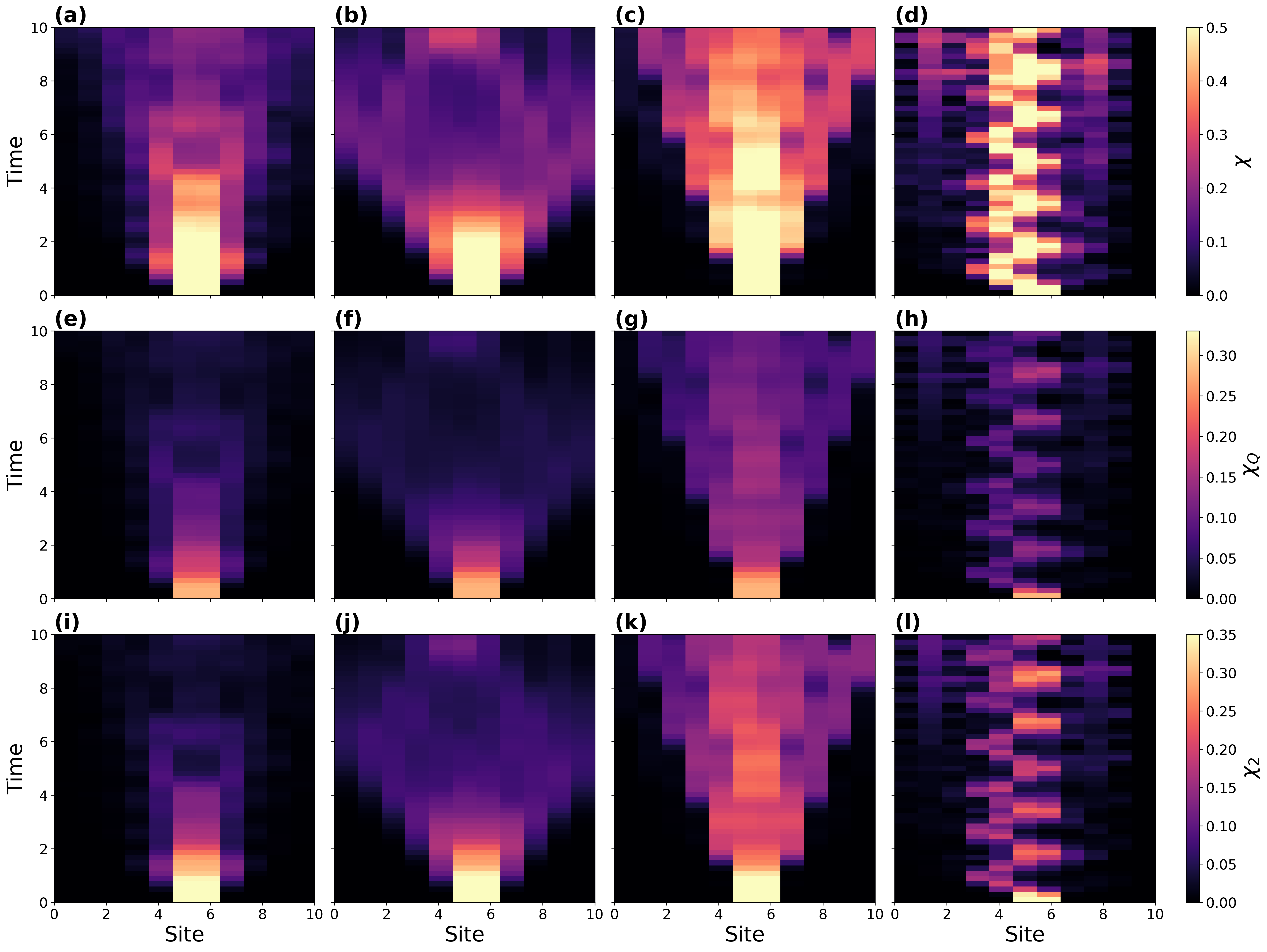} 
    \caption{Spatiotemporal resolution of information scrambling, $\chi_2$ versus Holevo $\chi$($L=12$). For each spatial site $x$ and time $t$, the plotted color intensity represents the respective information metric evaluated for a contiguous subsystem of size $L_A=2$ spanning sites $[x, x+1]$. The site index on the horizontal axis corresponds the left spin of the targeted subsystem. The top row (a)-(d) displays the standard Holevo information $\chi$, the middle row (e)-(h) displays the information from subentropy $\chi_Q$, while the bottom row (i)-(l) shows our averaged accessible information, $\chi_2$. (a), (e),(i) confinement in MFIM; (b), (f), (j) ballistic coherent transport in TFIM ($J=1, h=0.6$); (c), (g), (k) persistent scar revivals in PXP; and (d), (h), (l) strict localization in MBL.}
    \label{fig:heatmaps_spatiotemporal}
\end{figure*}

\subsection{Scrambling dynamics for different systems}
\label{results}

To initialize the information scrambling process, we construct an ensemble comprising the unperturbed base state $|\Psi_1\rangle$ and a locally perturbed state $|\Psi_2\rangle$. Specifically, the local perturbation is defined as a single-spin flip applied near the center of the chain, mathematically expressed as $|\Psi_2\rangle = \sigma_x^{(i)} |\Psi_1\rangle$, where $\sigma_x^{(i)}$ is the Pauli-X operator acting on site $i$. For the constrained PXP model, which forbids adjacent excitations due to the Rydberg blockade, we choose the perturbation to de-excite an active site (i.e., flipping $|\uparrow\rangle$ to $|\downarrow\rangle$). This protocol guarantees that the perturbed state $|\Psi_2\rangle$ remains confined within the physically valid scarred subspace. The subsequent decay of distinguishability between $|\Psi_1(t)\rangle$ and $|\Psi_2(t)\rangle$ serves as our signature of information scrambling.

\begin{figure*}[htbp]
    \centering
    \includegraphics[width=\textwidth]{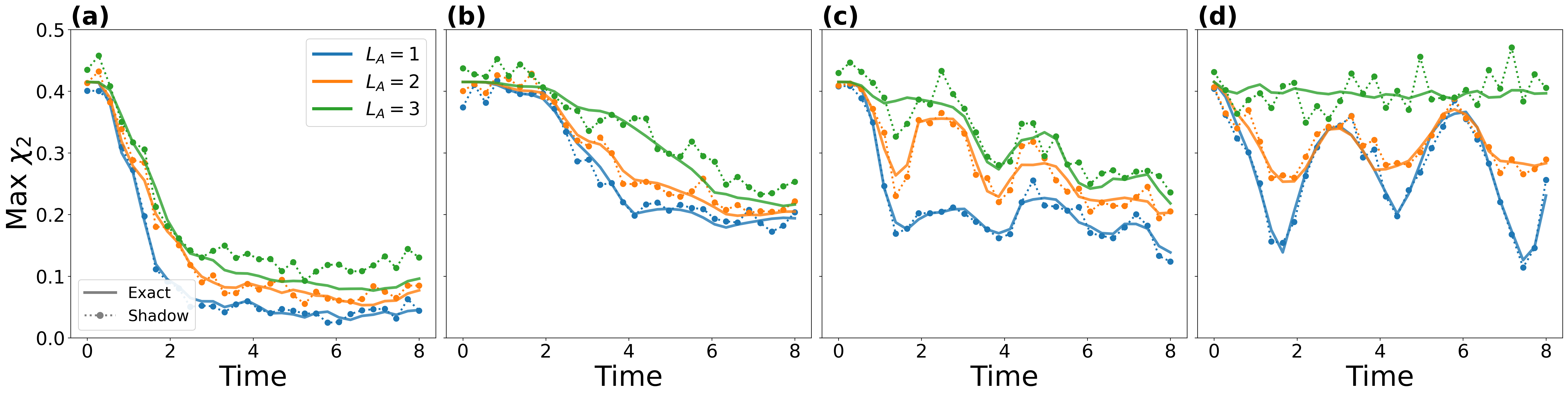} 
    \caption{Validation of the classical shadow protocol in non-ergodic regimes.
The row (a)-(d) compare the exact evolution of the maximal $\chi_2$ (solid lines) with estimations from the classical shadow protocol (dotted lines with markers, $N_{\text{shots}}=3000$) for various subsystem sizes $L_A \in \{1, 2, 3\}$. At each time step, the plotted maximum is evaluated over all $\binom{L}{L_A}$ possible spatial combinations of the $L_A$-site subsystem within the $L=10$ chain. Each column corresponds to a distinct dynamical paradigm: (a) the chaotic mixed-field Ising model; (b) the integrable transverse-field Ising model($J=1, h=0.3$); (c) the PXP model exhibiting quantum many-body scars and (d) the many-body localized system.}
    \label{fig:CS_four_sys_dynamic}
\end{figure*}

In Fig.~\ref{fig:heatmaps_spatiotemporal}, we provide a systematic comparison of the spatiotemporal dynamics of information scrambling across the four archetypal many-body paradigms. The top row (a)-(d) displays the evolution of the conventional Holevo information $\chi$, while the bottom row (i)-(l) shows our proposed averaged accessible information, $\chi_2$. The central finding is the striking qualitative equivalence between the two metrics. For each model, $\chi_2$ flawlessly reproduces the hallmark dynamical signatures captured by $\chi$: (a), (e), (i) the chaotic light cone and subsequent information scrambling in the MFIM; (b), (f), (j) the information scrambling of free quasiparticles in the integrable TFIM; (c), (g), (k) the persistent oscillatory revivals characteristic of quantum many-body scars in the PXP model; and (d), (h), (l) the information freezing and absence of transport in the MBL phase. This visual correspondence provides compelling evidence that $\chi_2$ serves as a high-fidelity, yet experimentally tractable, proxy for the full Holevo information, validating it as a powerful tool for probing non-equilibrium dynamics.

As shown in Fig.~\ref{fig:CS_four_sys_dynamic}, we can comprehensively understand the diverse dynamical behaviors of the averaged accessible information $\chi_2$ across the four investigated models through the lens of the Eigenstate Thermalization Hypothesis (ETH) and its various mechanisms of ergodicity breaking. To capture the leading edge of information propagation, the plotted $\chi_2$ represents the maximum value evaluated over all possible spatial combinations of the $L_A$-site subsystem at each time step. The robust retrieval of these diverse signatures is made possible by the classical shadow protocol. The dotted lines in Fig.~\ref{fig:CS_four_sys_dynamic} demonstrate that the shadow estimations track the dynamics (solid lines) across all subsystem sizes, verifying that our randomized measurement protocol successfully extracts the AAI even in complex many-body evolution.

Although the local information dynamics of the integrable TFIM and the chaotic MFIM appear phenomenologically similar---both exhibiting a rapid initial decay captured by the shadow estimations in Fig.~\ref{fig:CS_four_sys_dynamic}(a) and Fig.~\ref{fig:CS_four_sys_dynamic}(b)---their underlying physical mechanisms are distinct. In the TFIM, the apparent information loss is driven by the ballistic transport of unconstrained domain-wall excitations. Furthermore, as visualized in the spatiotemporal heatmaps [Fig.~\ref{fig:heatmaps_spatiotemporal}(f), (j)], the propagating information wavefront eventually reaches the ends of the open boundary chain and reflects inwards. Such boundary-induced reflections highlight the role of open boundary conditions in the dynamics: while local scrambling proceeds via ballistic propagation, the wavefront eventually reaches the system boundaries and is reflected back into the bulk. In a finite system of size $L$, this leads to interference between counter-propagating fronts and results in finite-size revivals at later times. In stark contrast, the longitudinal field in the MFIM explicitly breaks the $Z_2$ symmetry, inducing a linear confining potential (a ``string tension'') that binds the propagating domain walls. This confinement halts free propagation and forces the system into true many-body scrambling.

The PXP and many-body localization (MBL) models exemplify two distinct paradigms of ergodicity breaking in Fig.~\ref{fig:CS_four_sys_dynamic}(c) and Fig.~\ref{fig:CS_four_sys_dynamic}(d). The PXP model exhibits weak ergodicity breaking driven by quantum many-body scars. When quenched from the N\'eel state, the dynamics are trapped within a non-thermal subspace, severely delaying thermalization and giving rise to the pronounced, persistent oscillatory revivals of $\chi_2$. Conversely, the MBL system represents strong ergodicity breaking. The presence of strong disorder leads to the emergence of local integrals of motion, prohibiting energy transport and thermalization across the entire spectrum. Thus, the local information remains permanently localized. The classical shadow protocol captures this phenomenon, reflected by the persistent large-amplitude fluctuations characteristic of a single disorder realization.
The phenomenon of Many-Body Localization (MBL) is commonly studied in the one-dimensional disordered Heisenberg spin chain, governed by the Hamiltonian
\begin{align}
    H = &\sum_i \left[ J_{\perp} (\sigma_i^+ \sigma_{i+1}^- + \text{h.c.}) + J_z \sigma_i^z \sigma_{i+1}^z \right] + \sum_i h_i \sigma_i^z,
    \label{eq:mbl_hamiltonian}
\end{align}
where the on-site fields $h_i$ are drawn from a uniform random distribution $[-W, W]$. For sufficiently strong disorder $W$, the system enters an MBL phase, which, despite the presence of interactions, fails to thermalize.

\begin{figure}[htbp]
    \centering
     \includegraphics[width=0.5\textwidth]{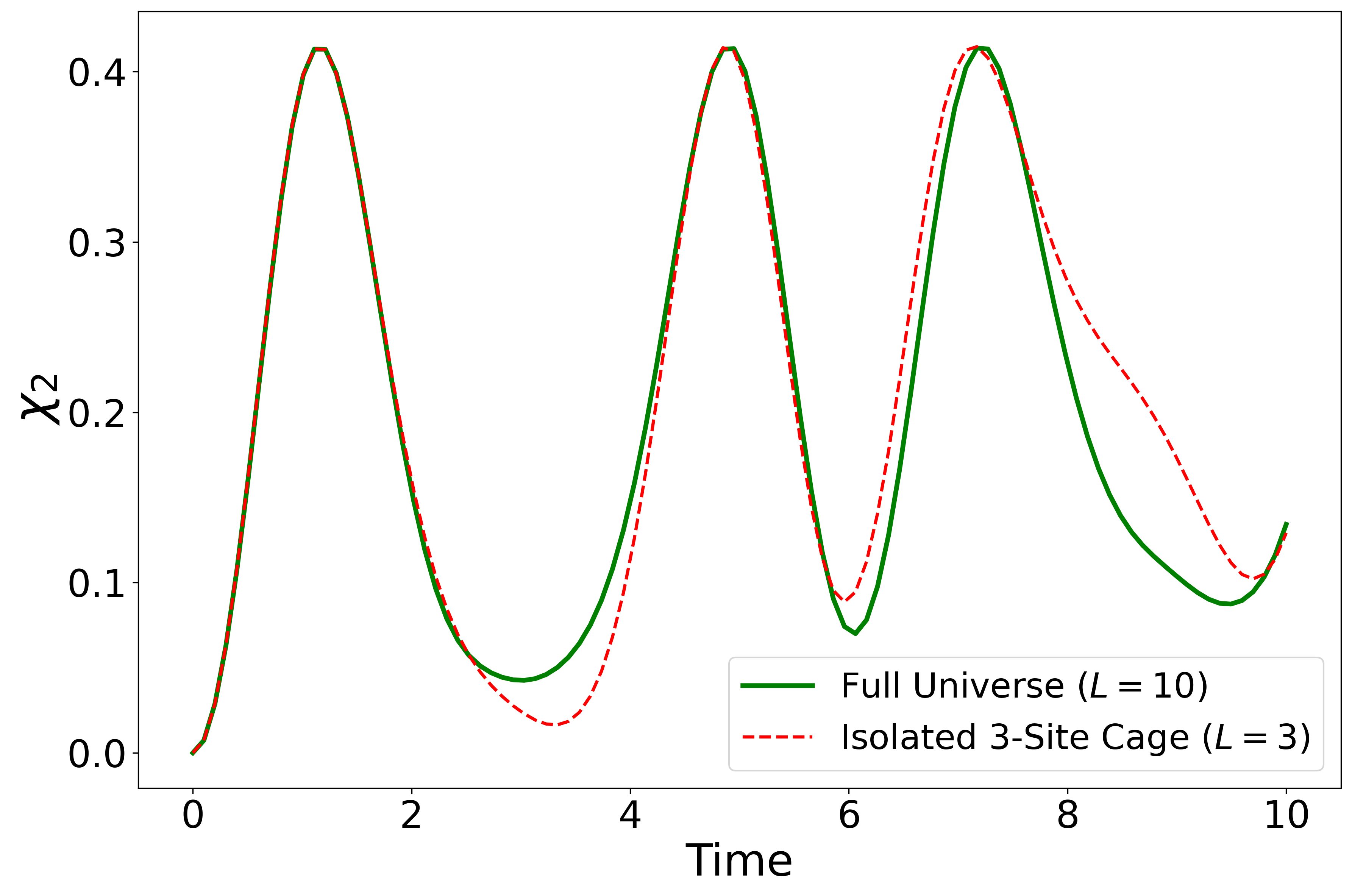}
    \caption{Dynamical confinement of information in the MBL phase. The solid green curve represents the dynamics within the full $H_{MBL}$ Hamiltonian system ($L=10$), while the dashed red curve corresponds to an artificially isolated three-site cage ($L=3$) sharing the same local disorder. The striking overlap of the high-amplitude oscillations at early-to-medium times confirms that the effective l-bits are strictly localized, rendering the local dynamics oblivious to the extended thermodynamic bath.
    }
    \label{fig:MBL single}
\end{figure}

A defining characteristic of the MBL phase is the emergence of a complete set of local integrals of motion, often termed "l-bits" $\{\tau_i^z\}$, which are quasi-local operators that commute with the Hamiltonian, $[{H}, \tau_i^z] = 0$ \cite{MBL-2014}. These l-bits are related to the physical spins ("p-bits") $\{\sigma_i^z\}$ via a quasi-local unitary transformation. This relationship implies that a physical spin operator is "dressed" by a cloud of l-bits, expressed as an expansion $\sigma_i^z = \sum_j Z_{ij} \tau_j^z + \sum_{j,k} A_{ijk} \tau_j^z \tau_k^z + \dots$, where the coefficients decay exponentially with distance, e.g., $|Z_{ij}| \propto e^{-|i-j|/\xi}$, for some localization length $\xi$.

In this l-bit basis, the system is described by an effective Hamiltonian with exponentially decaying interactions,
\begin{align}
    H_{\text{eff}} = &\sum_i \tilde{h}_i \tau_i^z + \sum_{i<j} \tilde{J}_{ij} \tau_i^z \tau_j^z + \dots, \\&\quad \text{where} \quad \tilde{J}_{ij} \propto e^{-|i-j|/\xi} \nonumber.
\end{align}

While the residual interactions $\tilde{J}_{ij}$ eventually lead to slow dephasing and logarithmic entanglement growth at asymptotically late times, the short-to-medium time dynamics are overwhelmingly dominated by the non-interacting Larmor precession in the local effective fields $\tilde{h}_i$. To explicitly demonstrate this stringent dynamical confinement, we compare the evolution of the averaged accessible information $\chi_2$, for a full chain ($L=10$) with a strictly truncated, isolated sub-region ($L=3$) under the same local disorder realization in Fig.~\ref{fig:MBL single}. The near-perfect coincidence of the oscillatory dynamics at early times verifies that the perturbation is strictly localized within an microscopic adjacent neighborhood. The thermodynamic bath of the extended system remains dynamically decoupled until the exponentially suppressed residual interactions ($\tilde{J}_{ij}$) finally induce a slight divergence at later times.

\section{DISCUSSION AND CONCLUSIONS}

It is instructive to comparatively analyze its operational advantages and intrinsic limitations relative to out-of-time-order correlators, the canonical metric for quantum scrambling. The two approaches possess complementary strengths. A principal advantage of OTOCs is their direct relation to operator growth and quantum chaos. In particular, OTOCs can reveal universal features such as ballistic operator spreading and exponential sensitivity to perturbations, quantities that are not directly accessible from $\chi_2$. Therefore, our approach should not be regarded as a replacement for OTOCs in studies where microscopic operator dynamics are the primary focus. The primary experimental advantage of our AAI framework, evaluated via the classical shadow protocol, lies in its reliance exclusively on forward time evolution and localized, single-qubit randomized Pauli measurements. Conversely, the direct experimental measurement of OTOCs typically necessitates the implementation of complex time-reversal protocols (i.e., evolving the system backward in time via $e^{-iHt}$) or the introduction of extensive auxiliary entangled modes \cite{swingle2018unscrambling, landsman2019verified}. In quantum simulators, imperfect control and environmental decoherence make precise backward evolution exceptionally challenging, rendering the forward-only shadow protocol pragmatic. We can view the AAI and OTOCs as complementary tools: the former provides an experimentally accessible, information-theoretic lens for local distinguishability, while the latter remains the standard for capturing  operator spreading. Establishing a more quantitative connection between AAI and operator-growth diagnostics, particularly the relation between $\chi_2$ and OTOCs in chaotic many-body systems, constitutes an interesting direction for future investigation.

In summary, we have introduced a paradigm of characterizing quantum information scrambling with uninformed randomized local probes. By rigorously averaging the information gain over all measurement bases via the Haar measure, we have derived the averaged accessible information (AAI), quantified by the R\'enyi-2 based metric $\chi_2$. Operationally, $\chi_2$ can be efficiently and unbiasedly estimated using the classical shadow protocol with single-qubit randomized Pauli measurements. Through extensive spatiotemporal tracking across four archetypal many-body systems, we have validated that the information captured by these "blind" local probes clearly and reliably reveals distinct thermodynamic dynamical behaviors—resolving chaotic confinement (MFIM), ballistic transport (TFIM), persistent scar revivals (PXP), and strict localization (MBL). Our framework has bridged the critical gap between abstract information-theoretic bounds and the experimental exploration of complex non-equilibrium dynamics in programmable quantum simulators.
\label{conclusion}

\section*{DATA AVAILABILITY}
The numerical simulation code and data used to generate the results in this study are openly available on GitHub~\cite{chen2026code}.

\begin{acknowledgments}
    
This work was supported by the National Natural Science Foundation of China (Grant No.12375013, No.12547109), the Guangdong Basic and Applied Basic Research Fund (Grant No.2023A1515011460), and Guangdong Provincial Quantum Science Strategic Initiative (Grant No. GDZX2503008).
    
\end{acknowledgments}
\appendix

\section{Analytical Derivation of $Q_2(\rho)$}
\label{app:renyi2_derivation}

In this appendix, we provide a detailed analytical derivation of the Haar-averaged classical R\'enyi-2 information $I_2$. We consider a quantum ensemble $\mathcal{E} = \{p_i, \ket{\psi_i}\}$ with the corresponding density matrix $\rho = \sum_i p_i \ket{\psi_i}\bra{\psi_i}$. Suppose we perform measurements in a randomly chosen basis $\mathcal{M} = \{\ket{\alpha_j}\}_{j=1}^d$. The conditional probability of obtaining the $j$-th outcome given the state $\ket{\psi_i}$ is $P(j|i) = |\langle \alpha_j | \psi_i \rangle|^2$, and the marginal probability is $P(j) = \sum_i p_i P(j|i) = \langle \alpha_j | \rho | \alpha_j \rangle$.

The classical R\'enyi-2 entropy for a probability distribution $q$ is defined as $H_2(q) = -\log \sum_j q_j^2$. Consequently, the classical R\'enyi-2 mutual information extracted from the measurement $\mathcal{M}$ is defined as,
\begin{align}
    I_2&(\mathcal{E} : \mathcal{M}) 
    = H_2(P(j)) - \sum_i p_i H_2(P(j|i)) \nonumber \\
    &= -\log\left(\sum_j P(j)^2\right) + \sum_i p_i \log\left(\sum_j P(j|i)^2\right)
    \label{eq:app_renyi2_expanded}
\end{align}

To evaluate the average over the Haar measure, we adopt the standard "annealed" approximation $\langle \log X \rangle_{\text{Haar}} \approx \log \langle X \rangle_{\text{Haar}}$. This procedure is justified in high-dimensional Hilbert spaces due to the concentration of measure, and it is the standard protocol for defining macroscopic Rényi entropies in randomized measurement settings \cite{mintert2004wehrl, elben2018renyi}. Under this established framework, the Haar-averaged information evaluates to,
\begin{align} \label{eq:I2_annealed}
    \langle I_2 \rangle_{\text{Haar}} =& -\log \left\langle \sum_j \langle \alpha_j | \rho | \alpha_j \rangle^2 \right\rangle_{\text{Haar}} \nonumber \\&+ \sum_i p_i \log \left\langle \sum_j |\langle \alpha_j | \psi_i \rangle|^4 \right\rangle_{\text{Haar}}.
\end{align}

We first evaluate the Haar average of the marginal probability squared. Let $x_k = |\langle \alpha_j | e_k \rangle|^2$. Due to the normalization of the basis vectors, we have $\sum_{k=1}^d x_k = 1$ and $x_k \ge 0$. We can rewrite the squared marginal probability as,
\begin{align} \label{eq:rho_expansion}
    \langle \alpha_j | \rho | \alpha_j \rangle^2 &= \left( \sum_k \lambda_k x_k \right)^2 \nonumber \\
    &= \sum_k \lambda_k^2 x_k^2 + 2 \sum_{k < l} \lambda_k \lambda_l x_k x_l.
\end{align}

Averaging over the Haar-random basis vector $\ket{\alpha_j}$ is equivalent to integrating the probabilities $x_k$ over the $(d-1)$-dimensional simplex $\Delta_{d-1}$ with a uniform measure. Utilizing the standard Dirichlet integral formula,
\begin{align}
    \int_{\Delta_{d-1}} \prod_{k=1}^d x_k^{\alpha_k - 1} dx_1 \dots dx_{d-1} = \frac{\prod_{k=1}^d \Gamma(\alpha_k)}{\Gamma\left(\sum_{k=1}^d \alpha_k\right)},
\end{align}
where $\Gamma(\cdot)$ is the Gamma function. The normalization factor for the uniform measure is $(d-1)!$. Using standard moments of the Dirichlet distribution ($\langle x_k^2 \rangle = \frac{2}{d(d+1)}$ and $\langle x_k x_l \rangle = \frac{1}{d(d+1)}$ for $k \neq l$), we can simplify Eq.~\eqref{eq:rho_expansion}. The Haar average for a single basis state $\ket{\alpha_j}$ yields,
\begin{align}
    \langle \langle \alpha_j | \rho | \alpha_j \rangle^2 \rangle_{\text{Haar}} &= \sum_k \lambda_k^2 \frac{2}{d(d+1)} + 2 \sum_{k < l} \lambda_k \lambda_l \frac{1}{d(d+1)} \nonumber \\
    &= \frac{1}{d(d+1)} \left[ \sum_k \lambda_k^2 + \left(\sum_k \lambda_k\right)^2 \right] \nonumber \\
    &= \frac{\text{Tr}(\rho^2) + 1}{d(d+1)},
\end{align}
where we used the identities $\sum_k \lambda_k = \text{Tr}(\rho) = 1$ and $\sum_k \lambda_k^2 = \text{Tr}(\rho^2)$.

Exploiting the unitary invariance, the sum over all $d$ basis vectors in $\mathcal{M}$ simply multiplies the result by $d$,
\begin{align}
    \left\langle \sum_j \langle \alpha_j | \rho | \alpha_j \rangle^2 \right\rangle_{\text{Haar}}&= \frac{\text{Tr}(\rho^2) + 1}{d+1}.
\end{align}
Therefore, the first term in Eq.~\eqref{eq:I2_annealed} evaluates to $-\log\left( \frac{\text{Tr}(\rho^2) + 1}{d+1} \right)$.

For the second term, we evaluate the Haar average of $\sum_j |\langle \alpha_j | \psi_i \rangle|^4$ for a given pure state $\ket{\psi_i}$. By letting $y_{j,i} = |\langle \alpha_j | \psi_i \rangle|^2$, we notice that $\sum_j y_{j,i} = 1$. Due to Haar symmetry, the average is independent of the choice of the initial pure state $\ket{\psi_i}$ or the specific basis vector $j$, leading to $\langle \sum_j y_{j,i}^2 \rangle_{\text{Haar}} = d \langle y_{1,i}^2 \rangle_{\text{Haar}}$.

Using the same integration results derived above for pure states (where only one eigenvalue is $1$ and the rest are $0$), we find $\langle y_{1,i}^2 \rangle_{\text{Haar}} = \frac{2}{d(d+1)}$. Thus,
\begin{align}
    \left\langle \sum_j |\langle \alpha_j | \psi_i \rangle|^4 \right\rangle_{\text{Haar}} = d \cdot \frac{2}{d(d+1)} = \frac{2}{d+1}.
\end{align}
Since $\sum_i p_i = 1$, the second term in Eq.~\eqref{eq:I2_annealed} evaluates to $\log\left( \frac{2}{d+1} \right)$.

Combining the evaluations of both terms, the Haar-averaged classical R\'enyi-2 mutual information reduces to a simple and elegant form,
\begin{align} \label{eq:I2_final}
    Q_2(\rho)\equiv\langle I_2 \rangle_{\text{Haar}} &= -\log\left( \frac{\text{Tr}(\rho^2) + 1}{d+1} \right) + \log\left( \frac{2}{d+1} \right) \nonumber \\
    &= \log\left( \frac{2}{1 + \text{Tr}(\rho^2)} \right).
\end{align}
Notice that the system dimension $d = 2^N$ cancels out completely in the final expression. This analytical result rigorously links the classical information to the R\'enyi-2 entropy $H_2(\rho)$, establishing a solid foundation for probing information dynamics.

\section{Proof of the Logarithmic Purity Identity}
\label{dengshi}

\begin{lemma}
Let $\rho$ be an $n$-dimensional  density matrix with eigenvalues $\{\lambda_i\}_{i=1}^n$. The following algebraic identity holds universally,
\begin{align}
    -\ln \left( \sum_{i=1}^n \frac{\lambda_i^{n+1}}{\prod_{j \neq i} (\lambda_i - \lambda_j)} \right) = \ln \left( \frac{2}{1 + \operatorname{Tr}(\rho^2)} \right).
\end{align}
\end{lemma}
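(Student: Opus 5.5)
The plan is to show that the sum inside the logarithm on the left-hand side is exactly $(1+\operatorname{Tr}(\rho^2))/2$. Once that is established, the lemma follows immediately from $-\ln x = \ln(1/x)$. I will prove the stronger statement that
\begin{equation*}
S_m(\lambda) \equiv \sum_{i=1}^n \frac{\lambda_i^{m}}{\prod_{j\neq i}(\lambda_i-\lambda_j)} = h_{m-n+1}(\lambda_1,\dots,\lambda_n),
\end{equation*}
where $h_k$ is the complete homogeneous symmetric polynomial of degree $k$. The case needed here is $m=n+1$, which gives $h_2$.

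The main tool is a contour-integral (residue) argument, which fits naturally with the contour representations the paper discusses in connection with Jozsa's work. First assume the eigenvalues are pairwise distinct, and consider
\begin{equation*}
f(z)=\frac{z^{n+1}}{\prod_{j=1}^n (z-\lambda_j)} .
\end{equation*}
Its poles are simple and sit at $z=\lambda_i$, with residues $\lambda_i^{n+1}/\prod_{j\neq i}(\lambda_i-\lambda_j)$. Hence $S_{n+1}$ equals $\frac{1}{2\pi i}\oint f(z)\,dz$ over a large circle enclosing all eigenvalues. On that circle I expand $f$ at infinity as
\begin{equation*}
f(z)=z\prod_j (1-\lambda_j/z)^{-1}=z\sum_{k\ge 0} h_k(\lambda)\,z^{-k} .
\end{equation*}
The coefficient of $z^{-1}$ in this expansion is $h_2(\lambda)=\sum_{i\le j}\lambda_i\lambda_j$, so the contour integral equals $h_2(\lambda)$.

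Next I use Newton-type identities to write $h_2=\tfrac12\big[(\sum_i\lambda_i)^2+\sum_i\lambda_i^2\big]$. Substituting $\sum_i\lambda_i=\operatorname{Tr}\rho=1$ and $\sum_i\lambda_i^2=\operatorname{Tr}(\rho^2)$ gives $S_{n+1}=(1+\operatorname{Tr}(\rho^2))/2$. Since this quantity lies in $(1/2,1]$, the logarithm is well defined, and the identity follows.

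The main obstacle is degenerate spectra, such as pure states or maximally mixed states, where individual terms of the left-hand side are singular. I will handle this by noting that $S_{n+1}$ is a divided difference of $z^{n+1}$. It is therefore a symmetric polynomial in the $\lambda_i$ on the open dense set of distinct eigenvalues, and it extends continuously to all of $\mathbb{R}^n$. I will interpret the left-hand side for degenerate spectra as this continuous extension, which is the only sensible meaning. Under that interpretation, the polynomial identity $S_{n+1}=h_2$ holds everywhere by continuity. Equivalently, the contour integral itself is well defined for any spectrum and equals $h_2$ directly.
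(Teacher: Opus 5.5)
Your proof is correct, but it reaches the key identity $\sum_i \lambda_i^{n+1}/\prod_{j\neq i}(\lambda_i-\lambda_j)=h_2(\lambda)$ by a different route from the paper.

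\textbf{The paper's route.} It stays purely algebraic. It divides $x^{n+1}$ by the characteristic polynomial $g(x)=\prod_i(x-\lambda_i)$. Lagrange interpolation on the remainder $r(x)$, which agrees with $x^{n+1}$ at the $n$ distinct eigenvalues, identifies the sum as the $x^{n-1}$ coefficient of $r(x)$. Two steps of long division, $x^{n+1}=(x+S_1)g(x)+(S_1^2-S_2)x^{n-1}+\dots$, then give that coefficient as $S_1^2-S_2$.

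\textbf{Your route.} You read the sum as the total finite residue of $z^{n+1}/\prod_j(z-\lambda_j)$. You evaluate it at infinity through the generating function $\prod_j(1-\lambda_j/z)^{-1}=\sum_k h_k z^{-k}$. The last step is the same in both: $S_1^2-S_2=h_2=\tfrac12(p_1^2+p_2)$ with $p_1=1$ and $p_2=\operatorname{Tr}(\rho^2)$.

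\textbf{What each buys.}
\begin{itemize}
\item Your argument gives the general identity $S_m=h_{m-n+1}$ at no extra cost.
\item It handles degenerate spectra cleanly, because the large-circle integral does not care whether poles coalesce. The paper's proof in Appendix B explicitly covers only the non-degenerate case. The continuity extension is only asserted later, in Appendix C.
\item It supplies exactly the step that Appendix C outsources to polynomial division, so the contour-integral representation becomes self-contained.
\item The paper's argument is elementary and avoids complex analysis. However, Lagrange interpolation in the form used there requires distinct nodes, so it must lean on that separate continuity argument.
\end{itemize}

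\textbf{A minor precision.} If some $\lambda_i=0$, which happens for any rank-deficient $\rho$, then $z=0$ is a removable singularity rather than a simple pole. The corresponding term $\lambda_i^{n+1}/\prod_{j\neq i}(\lambda_i-\lambda_j)$ is then $0$, so your residue sum is unaffected.
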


\begin{proof}
We first assume that the density matrix $\rho$ has a non-degenerate spectrum, i.e., $\lambda_i \neq \lambda_j$ for all $i \neq j$. Consider the characteristic polynomial of $\rho$, defined as $g(x) = \prod_{i=1}^n (x - \lambda_i)$. Expanding $g(x)$ yields,
\begin{align}
    g(x) = x^n - S_1 x^{n-1} + S_2 x^{n-2} - \dots + (-1)^n S_n,
\end{align}
where $S_k$ are the elementary symmetric polynomials of the eigenvalues. Specifically, $S_1 = \sum_i \lambda_i = \operatorname{Tr}(\rho) = 1$, and $S_2 = \sum_{i < j} \lambda_i \lambda_j$.

Performing division of $x^{n+1}$ by $g(x)$, there exist unique polynomials $q(x)$ and $r(x)$ such that $x^{n+1} = q(x)g(x) + r(x)$, with the degree of the remainder $\deg(r) \le n-1$. 
Evaluating this align at the roots of $g(x)$, we obtain $r(\lambda_i) = \lambda_i^{n+1}$ since $g(\lambda_i) = 0$. Using the Lagrange interpolation theorem, the remainder $r(x)$ can be uniquely constructed as,
\begin{align}
    r(x) = \sum_{i=1}^n \lambda_i^{n+1} \prod_{j \neq i} \frac{x - \lambda_j}{\lambda_i - \lambda_j}. \label{eq:lagrange}
\end{align}
The coefficient of the highest-order term $x^{n-1}$ in $r(x)$ is the argument of the logarithm in the left-hand side of Eq.~(1), which corresponds to the complete homogeneous symmetric polynomial of degree two, $h_2(\lambda_1, \dots, \lambda_n)$.

Alternatively, one can determine this coefficient via direct algebraic expansion. Multiplying the characteristic polynomial by $x$, we have
\begin{align}
    x^{n+1} &= x \cdot g(x) + S_1 x^n - S_2 x^{n-1} + \dots \nonumber \\
            &= x \cdot g(x) + S_1 \left( g(x) + S_1 x^{n-1} - \dots \right) \nonumber \\
            &- S_2 x^{n-1} + \dots \nonumber \\
            &= (x + S_1)g(x) + (S_1^2 - S_2)x^{n-1} + \mathcal{O}(x^{n-2}).
\end{align}
By the uniqueness of the polynomial division, the coefficient of $x^{n-1}$ in the remainder $r(x)$ must equal $S_1^2 - S_2$. Equating this to the coefficient obtained from Eq.~\eqref{eq:lagrange}, we arrive at the identity,
\begin{align}
    \sum_{i=1}^n \frac{\lambda_i^{n+1}}{\prod_{j \neq i} (\lambda_i - \lambda_j)} = S_1^2 - S_2.
\end{align}
Using the properties of the trace, we recall that $S_1 = \operatorname{Tr}(\rho) = 1$ and $S_1^2 = \sum_i \lambda_i^2 + 2\sum_{i < j} \lambda_i \lambda_j = \operatorname{Tr}(\rho^2) + 2S_2$. Rearranging this gives $S_2 = \frac{1}{2}[1 - \operatorname{Tr}(\rho^2)]$. Substituting $S_1$ and $S_2$ back into the coefficient yields,
\begin{align}
    S_1^2 - S_2 = 1 - \frac{1 - \operatorname{Tr}(\rho^2)}{2} = \frac{1 + \operatorname{Tr}(\rho^2)}{2}.
\end{align}
Taking the negative logarithm of both sides completes the algebraic proof for the non-degenerate case. 
\end{proof}

\section{Contour Integral Representation of the averaged accessible information}
\label{integral}

In this appendix, we establish the formal equivalence between the contour integral representation of the averaged accessible information, its complex spectral expansion, and its final simplified form based on purity. 

Inspired by the elegant contour integral formulations of the von Neumann entropy $S(\rho)$ and the subentropy $Q(\rho)$ introduced by Jozsa \textit{et al.} \cite{jozsa1994lower}, we define the core quantity of our randomized measurement protocol via an analogous integral over the complex plane,
\begin{equation}
    \mathcal{J}(\rho) \equiv \frac{1}{2\pi i} \oint_\Gamma z \det(I - \rho/z)^{-1} dz,
    \label{eq:app_integral_def}
\end{equation}
where the contour $\Gamma$ is a simple closed curve traversed counterclockwise, enclosing all the non-zero eigenvalues $\{\lambda_i\}_{i=1}^N$ of the $N$-dimensional density matrix $\rho$. 

To evaluate this integral, we first express the determinant in terms of the eigenvalues of $\rho$,
\begin{equation}
    \det(I - \rho/z)^{-1} = \prod_{k=1}^N \left( 1 - \frac{\lambda_k}{z} \right)^{-1} = z^N \prod_{k=1}^N \frac{1}{z - \lambda_k}.
\end{equation}
Substituting this back into Eq.~\eqref{eq:app_integral_def}, the integral takes the form of a rational function,
\begin{equation}
    \mathcal{J}(\rho) = \frac{1}{2\pi i} \oint_\Gamma \frac{z^{N+1}}{\prod_{k=1}^N (z - \lambda_k)} dz.
    \label{eq:app_integral_rational}
\end{equation}

Assuming, for the moment, that the density matrix $\rho$ possesses a non-degenerate spectrum (i.e., $\lambda_i \neq \lambda_j$ for all $i \neq j$), the integrand exhibits exactly $N$ simple poles located at $z = \lambda_i$ within the contour $\Gamma$. According to Cauchy's Residue Theorem, the integral evaluates to the sum of the residues at these poles,
\begin{align}
    \mathcal{J}(\rho) &= \sum_{i=1}^N \operatorname{Res}_{z=\lambda_i} \left[ \frac{z^{N+1}}{\prod_{k=1}^N (z - \lambda_k)} \right] \notag \\
    &= \sum_{i=1}^N \lim_{z \to \lambda_i} (z - \lambda_i) \frac{z^{N+1}}{\prod_{k=1}^N (z - \lambda_k)} \notag \\
    &= \sum_{i=1}^N \frac{\lambda_i^{N+1}}{\prod_{\substack{j=1 \\ j \neq i}}^N (\lambda_i - \lambda_j)}.
    \label{eq:app_spectral_form}
\end{align}
This result establishes the equivalence between the contour integral and the intricate spectral formulation. While derived under the assumption of non-degeneracy, this identity holds universally for all physical density matrices. Since both sides of Eq.~\eqref{eq:app_spectral_form} are continuous functions of the matrix elements of $\rho$ (the apparent singularities on the right-hand side are removable), the result extends to degenerate spectra via a standard continuity argument or L'H\^opital's rule.

Finally, we connect this spectral form to the operational purity $\operatorname{Tr}(\rho^2)$. As detailed in the main text (and fully proven via polynomial long division in Ref.~\cite{mendonca2008wehrl}), the right-hand side of Eq.~\eqref{eq:app_spectral_form} algebraically simplifies to $[1 + \operatorname{Tr}(\rho^2)]/2$. 

Therefore, by defining our averaged accessible information as $Q_2(\rho) \equiv -\ln \mathcal{J}(\rho)$, we obtain the complete chain of equivalences central to our framework,
\begin{align}
    Q_2(\rho) &= -\ln \left[ \frac{1}{2\pi i} \oint_\Gamma z \det(I - \rho/z)^{-1} dz \right] \notag \\
    &= -\ln \left( \sum_{i=1}^N \frac{\lambda_i^{N+1}}{\prod_{\substack{j \neq i}}(\lambda_i - \lambda_j)} \right) \notag \\
    &= \ln \left( \frac{2}{1 + \operatorname{Tr}(\rho^2)} \right).
    \label{eq:app_Q2_chain}
\end{align}
This cohesive mathematical structure bridges the abstract, optimal-measurement-bounded entropy theories with our pragmatic, randomized-probe observables.

\section{Rigorous Proof of Strict Concavity for the averaged accessible Information}
\label{app:concavity}

A requirement for any well-behaved entropic measure is concavity. It dictates that classical mixing of quantum states must not decrease our overall ignorance. While the standard quantum R\'enyi-2 entropy, $S_2(\rho) = -\log(\text{Tr}(\rho^2))$, is known to violate concavity in general, we rigorously prove below that our operationally derived metric, the averaged accessible Information $Q_2(\rho)$, perfectly cures this pathology and exhibits strict concavity globally across the state space.

We define the metric as $Q_2(\rho) = -\log(f(\rho))$, where $f(\rho) = \frac{1 + \mathcal{P}(\rho)}{2}$ and $\mathcal{P}(\rho) = \text{Tr}(\rho^2)$ is the purity.

To prove strict concavity on the convex set of density matrices $\mathcal{D}(\mathcal{H})$, it is necessary and sufficient to show that the second derivative of $Q_2$ along any valid straight line segment connecting two distinct density matrices is negative.

Let $\rho_0, \rho_1 \in \mathcal{D}(\mathcal{H})$ be two distinct density matrices ($\rho_0 \neq \rho_1$). We define a parameterized convex path for $t \in [0, 1]$,
\begin{equation}
    \rho(t) = (1-t)\rho_0 + t\rho_1 = \rho_0 + t\Delta,
\end{equation}
where $\Delta = \rho_1 - \rho_0$. By definition, $\Delta$ is a non-zero, traceless Hermitian matrix ($\text{Tr}(\Delta) = 0$).

We define the scalar function $G(t) \equiv Q_2(\rho(t))$. Our goal is to prove $G''(t) < 0$ for all $t \in (0, 1)$. Without loss of generality, it suffices to evaluate the second derivative at $t=0$ for an arbitrary base state $\rho_0$ and any allowable perturbation $\Delta$.

The purity along the path is $P(t) = \text{Tr}[(\rho_0 + t\Delta)^2]$. Its derivatives at $t=0$ are,
\begin{align}
    P'(0) &= 2\text{Tr}(\rho_0\Delta), \label{eq:app_P_prime}\\
    P''(0) &= 2\text{Tr}(\Delta^2). \label{eq:app_P_dprime}
\end{align}
Since $\Delta$ is non-zero and Hermitian, $\text{Tr}(\Delta^2) > 0$.

Evaluating the derivatives of $G(t) = -\log\left( \frac{1 + P(t)}{2} \right)$ via the chain rule yields,
\begin{equation}
    G''(0) = -\frac{P''(0)\big(1 + P(0)\big) - \big(P'(0)\big)^2}{\big(1 + P(0)\big)^2}.
\end{equation}
To prove strict concavity ($G''(0) < 0$), we must show that the numerator, which we define as $K$, is strictly positive,
\begin{equation}
    K \equiv 2\text{Tr}(\Delta^2)\big(1 + \text{Tr}(\rho_0^2)\big) - 4\big(\text{Tr}(\rho_0\Delta)\big)^2 > 0.
    \label{eq:app_hessian_cond}
\end{equation}

We invoke the Cauchy-Schwarz inequality for the Hilbert-Schmidt inner product, $\big(\text{Tr}(AB)\big)^2 \le \text{Tr}(A^2)\text{Tr}(B^2)$ for Hermitian $A, B$. Applying this to the second term,
\begin{equation}
    4\big(\text{Tr}(\rho_0\Delta)\big)^2 \le 4\text{Tr}(\rho_0^2)\text{Tr}(\Delta^2).
    \label{eq:app_CS}
\end{equation}
Substituting this upper bound into Eq.~\eqref{eq:app_hessian_cond}, we establish a lower bound for $K$,
\begin{align}
    K &\ge 2\text{Tr}(\Delta^2)\big(1 + \text{Tr}(\rho_0^2)\big) - 4\text{Tr}(\rho_0^2)\text{Tr}(\Delta^2) \notag \\
    &= 2\text{Tr}(\Delta^2) \big[ 1 - \text{Tr}(\rho_0^2) \big].
    \label{eq:app_final_bound}
\end{align}

Since $\rho_0 \in \mathcal{D}(\mathcal{H})$, its purity satisfies $\text{Tr}(\rho_0^2) \le 1$, we analyze the two possible cases:

Case 1: $\rho_0$ is a mixed state ($\text{Tr}(\rho_0^2) < 1$).
In this case, $[1 - \text{Tr}(\rho_0^2)] > 0$. Since $\text{Tr}(\Delta^2) > 0$, Eq.~\eqref{eq:app_final_bound} guarantees that $K > 0$, confirming strict concavity.

Case 2: $\rho_0$ is a pure state ($\text{Tr}(\rho_0^2) = 1$).
Here, the lower bound in Eq.~\eqref{eq:app_final_bound} evaluates to $0$. However, we must examine the inequality in Eq.~\eqref{eq:app_CS} more closely. For a pure state, $\rho_0^2 = \rho_0$. The term we are bounding is $4\big(\text{Tr}(\rho_0\Delta)\big)^2$. 

Recall that $\Delta = \rho_1 - \rho_0$, where $\rho_1$ is another valid density matrix.
\begin{align}
    \text{Tr}(\rho_0\Delta) &= \text{Tr}(\rho_0\rho_1) - \text{Tr}(\rho_0^2) = \text{Tr}(\rho_0\rho_1) - 1.
\end{align}
Because $\rho_0$ and $\rho_1$ are distinct, and $\rho_0$ is pure, their overlap must be less than 1, $\text{Tr}(\rho_0\rho_1) < 1$. Thus, $\text{Tr}(\rho_0\Delta) < 0$.

Now, let us examine the Cauchy-Schwarz bound Eq.~\eqref{eq:app_CS} specifically for pure states. The equality holds if and only if $\Delta = c \rho_0$ for some scalar $c$. Taking the trace of both sides yields $\text{Tr}(\Delta) = c \text{Tr}(\rho_0)$. Since $\Delta$ is traceless and $\text{Tr}(\rho_0)=1$, we must have $c=0$, which implies $\Delta = 0$. But by definition, $\rho_1 \neq \rho_0$, so $\Delta \neq 0$.

Therefore, the Cauchy-Schwarz inequality in Eq.~\eqref{eq:app_CS} must be less than:
\begin{equation}
    4\big(\text{Tr}(\rho_0\Delta)\big)^2 < 4\text{Tr}(\rho_0^2)\text{Tr}(\Delta^2).
\end{equation}
Because the inequality is strict, the original expression $K$ in Eq.~\eqref{eq:app_hessian_cond} evaluates to,
\begin{align}
    K &> 2\text{Tr}(\Delta^2)\big(1 + 1\big) - 4\text{Tr}(\rho_0^2)\text{Tr}(\Delta^2) \notag \\
      &= 4\text{Tr}(\Delta^2) - 4(1)\text{Tr}(\Delta^2) = 0.
\end{align}
Thus, even at the boundary of pure states, $K$ is strictly positive, ensuring $G''(0) < 0$.

In conclusion, for any valid path within the density matrix space, the second derivative of $Q_2(\rho)$ is negative. The constant $+1$ in the argument of $Q_2(\rho)$ acts as a critical mathematical stabilizer, suppressing the non-concave cross-terms. This establishes $Q_2(\rho)$ as a strictly concave and mathematically rigorous measure of quantum ignorance.


%

\end{document}